\newtheorem{remark}{Remark}
\newtheorem{theorem}{Theorem}[]
\theoremstyle{definition}
\newtheorem{defn}{Definition}[]
\theoremstyle{remark}
\begin{document}
%
%
%
%
%
\title{A general framework for weighted sum-rate and common-rate optimization}

        \author{Koosha~Pourtahmasi Roshandeh$^{1*}$, Masoud~Ardakani$^2$ and Chintha~Tellambura$^3$\\
        	$^{1,2,3}$Department of Electrical and Computer Engineering, University of Alberta, Edmonton, Canada\\       
        	$^\star$ Email: pourtahm@ualberta.ca\

\thanks{}
\thanks{}
\thanks{}}

\maketitle

\begin{abstract}

In this paper, we propose a framework for solving a class of optimization problems encountered in a range of power allocation problems in wireless relay networks. In particular, power allocation for weighted sum-rate and common-rate optimization problems fall in this framework. Subject to some conditions on the region of feasible powers, the optimal solutions are analytically found. The optimization problems are posed in a general form and their solutions are shown to have applications in a number of practical scenarios. Numerical results verify the optimality of the analytical approach.

\end{abstract}

\begin{IEEEkeywords}
Weighted  sum-rate, common-rate, power allocation, optimization.
\end{IEEEkeywords}

%
\IEEEpeerreviewmaketitle

\section{Introduction}
%
%
%
%
 \IEEEPARstart{I}{n} multi-user  communication systems, weighted sum-rate and common-rate optimization problems have been extensively investigated. However, most existing results on weighted sum-rate or common-rate optimization are applicable to their specific setup. Usually, the common approach to the solution is through converting the problem to a convex one and then using standard convex optimization methods. In non-convex cases, approximate methods or heuristic algorithms are widely  used to find the near optimal solutions. Even in convex cases, the solutions are typically not in a closed-form.
 
 For instance, for single antenna transceivers, iterative algorithms for finding the optimal transmission power allocations to maximize weighted sum-rate under different quality of service (QoS) constraints in a wireless multi-link system have been presented in  \cite{1,2}. In \cite{3,4} and \cite{5}, authors have proposed approximate solutions by converting the non-convex weighted sum-rate and sum-rate maximization problems into the equivalent convex problems for multi-antenna and single-antenna sources, respectively. For a specific system model consists of  two single antenna transceivers and multiple relays, \cite{6} showed maximizing the sum-rate of two transceivers is equivalent to solving the common-rate optimization problem and proposed the optimal solution under the total power constraint. While \cite{6} offers analytical results, the solution is specific to its system model. 

%
%

In this letter, a general class of optimization problems which includes weighted sum-rate and common-rate optimization problems has been studied. We show that in a wide range of cases, the optimal solution can be found analytically, even when the feasible region of parameters is non-convex.  

In particular, for solving the weighted sum-rate and common-rate optimization problems in a multi-user network which consists of $N$ users,  one has to work with optimization problems with the general form of  $\max\limits_{}~    \sum_{i=1}^N  a_i\log_2(1+\gamma_{i})$ and $\max~ \min\limits_{i} (\gamma_{i})$, respectively, where $\gamma_i$ represents the SNR of user $i$ and $a_i$ is a constant weight. Also, some constraints on the region of feasible SNRs may exist, e.g.  due to power constraints. For this general setup, We propose the optimal solution to weighted sum-rate and common-rate maximization when the feasible region of SNRs satisfies some conditions (but need not be convex). 

Next, to show the practicality and usefulness of the considered framework and constraints, we first show that the results of \cite{6} can be restated under our framework. Then, we solve another practical example. More specifically, with the help of the proposed framework, the optimal power allocation for maximizing the weighted sum-rate and common-rate of a two-way relay network in which two users with single antenna communicate through a massive-antennas relay under sum-power constraint has been presented. Finally, simulation results have been represented to justify our framework and validate the theoretical analysis. Please note that the solution of the specific two-way relay network, considered here as an example, is a new result which does not appear in the literature. \\ 
\textit{\textbf{Notations}}: In this letter,   ${\bf h}^T$ and ${\bf h}^\dagger$  represent the transpose matrices and the Hermitian  of matrix ${\bf h}$, respectively.

\section{Framework}

Consider $N$ communication links. For example $N$ users trying to communicates with a base station or an ad-hoc network with $N$ users. The specific communication setup of these $N$ links is not our concern at this point. Let us denote the $\text{SNR}$ of $i$-th communication link at the destination by  $\displaystyle{\gamma_{i}}$. Assuming additive white Gaussian noise (AWGN), the weighted sum-rate optimization problem is defined as 

 \begin{equation}
 	\begin{cases}
 		\max\limits_{}~~~~    \sum_{i=1}^N  a_i\log_2(1+\gamma_{i})\\
 		\text {s.t.}~~~~~~~  \bar{\mathbf{\gamma}} \in \mathbf{\Theta} ,\\
 		
 	\end{cases}
 \end{equation}

 Where  $\bar{\mathbf{\gamma}}=(\gamma_1,\gamma_2,...,\gamma_n) \in  {\rm I\!R}^{+^n}$  is the vector of communication links $\text{SNR}$s  and   $\mathbf{\Theta}$ is the feasible region of achievable $\text{SNR}$s.
 
 The weighted sum-rate optimization problem can be further simplified as 

\begin{equation}
\begin{cases}
\max\limits_{}~~~~    \prod_{i=1}^N (1+\gamma_{i})^{a_i}\\
\text {s.t.}~~~~~~~  \bar{\mathbf{\gamma}} \in \mathbf{\Theta}.\\

\end{cases}
\end{equation}
Similarly, common-rate optimization problem in this general setup, can be expressed as

\begin{equation}
\begin{cases}
\max~~~~    \min\limits_{i} (\gamma_{i})\\
\text {s.t.}~~~~~~~  \bar{\mathbf{\gamma}} \in \mathbf{\Theta}.\\

\end{cases}
\end{equation}

To achieve the optimal solutions of mentioned problems, we first define some useful notations.



\begin{defn}
	
	For a given positive constant K and set $B=\{b_1,b_2...,b_n \} \subset {\rm I\!R}^+$ we define  $\mathbf{\Omega}_{B}(K)$ as the set of all points $\mathbf{X}=(X_1,X_2,...,X_n) \in {\rm I\!R}^{+^n}$ such that

	\[
	\sum_{i=1}^{n} b_i X_i \leq K.  
	\] 
\end{defn}  

\begin{defn}
	Let $A=\{a_1,a_2...,a_n \}$ and $B=\{b_1,b_2...,b_n \}$ be non-zero finite subsets of ${\rm I\!R}^+$ and K be a given positive constant. We define {\large $ \mathbf{\vartheta}_{A,B} ^K=({\vartheta}_{a_1,b_1} ^K,{\vartheta}_{a_2,b_2} ^K,...,{\vartheta}_{a_n,b_n} ^K) \in {\rm I\!R}^n$} where
\[	
	 {\large  \mathbf{\vartheta}_{a_i,b_i} ^K}= \displaystyle{\frac{a_i K}{b_i \sum_{i=1}^{n} a_i},~~ \forall i \in \{1,2,...,n \}}.	 
\] 
\end{defn}

One can easily show that for any arbitrary  $A \subset {\rm I\!R^+}$,  $ \mathbf{\vartheta}_{A,B} ^K \in \mathbf{\Omega}_{B}(K) $. We are now ready to propose the following theorems.


\begin{theorem}\label{Theo:sum rate}
Consider $\mathbf{X}=(X_1,X_2,...,X_n)  \in {\rm I\!R}^{+^n}$, $\mathbf{\Theta} \subset \mathbf{\Omega}_{B}(K)$ and $\mathbf{\vartheta}_{A,B} ^K \in \mathbf{\Theta}$. Then, $\displaystyle{\mathbf{X}=\mathbf{\vartheta}_{A,B} ^K}$ is the optimal solution of the following optimization problem
 \begin{equation}
\begin{cases}
\max~~~~    \prod_{i=1}^n X_i^{a_i}\\
\text{s.t.}~~~~~~~  \mathbf{X} \in \mathbf{\Theta}\\

\end{cases}
\end{equation}
 
\end{theorem}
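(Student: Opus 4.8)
The plan is to relax the feasible set and then exploit the inclusion $\mathbf{\Theta} \subset \mathbf{\Omega}_{B}(K)$. First I would replace the problem of maximizing $\prod_{i=1}^n X_i^{a_i}$ over $\mathbf{\Theta}$ by the easier problem of maximizing the same objective over the whole region $\mathbf{\Omega}_{B}(K)$, i.e. under the single linear budget constraint $\sum_{i=1}^n b_i X_i \leq K$. Since every $a_i>0$, the objective is strictly increasing in each coordinate, so any maximizer must drive the budget constraint to equality; it therefore suffices to optimize on the hyperplane $\sum_{i=1}^n b_i X_i = K$.

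Second, on this hyperplane I would invoke the weighted arithmetic--geometric mean inequality with weights $w_i = a_i/S$, where $S=\sum_{j=1}^n a_j$, applied to the positive quantities $b_i X_i / a_i$. This gives $\prod_{i=1}^n (b_i X_i/a_i)^{a_i/S} \leq \frac{1}{S}\sum_{i=1}^n b_i X_i \leq K/S$, and raising to the power $S$ and rearranging yields the bound $\prod_{i=1}^n X_i^{a_i} \leq (K/S)^S \prod_{i=1}^n (a_i/b_i)^{a_i}$, valid on all of $\mathbf{\Omega}_{B}(K)$. The equality condition of AM--GM forces $b_i X_i/a_i$ to be constant in $i$; combined with the active constraint this pins the maximizer to $X_i = a_i K/(b_i S)$, which is precisely $\vartheta_{a_i,b_i}^K$. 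Equivalently, a one-line Lagrange/KKT computation on $\sum_i a_i \ln X_i - \lambda(\sum_i b_i X_i - K)$ gives $X_i = a_i/(\lambda b_i)$ with $\lambda = S/K$, recovering the same point. Either way, $\mathbf{\vartheta}_{A,B}^K$ is the unique maximizer over the relaxed region $\mathbf{\Omega}_{B}(K)$.

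Finally I would close the argument using the inclusion. Because $\mathbf{\Theta}\subset \mathbf{\Omega}_{B}(K)$, the optimal value over $\mathbf{\Theta}$ cannot exceed the optimal value over $\mathbf{\Omega}_{B}(K)$; that is, $\prod_{i=1}^n X_i^{a_i} \leq \prod_{i=1}^n (\vartheta_{a_i,b_i}^K)^{a_i}$ for every $\mathbf{X}\in\mathbf{\Theta}$. By hypothesis the relaxed optimizer $\mathbf{\vartheta}_{A,B}^K$ itself lies in $\mathbf{\Theta}$, so this upper bound is actually attained inside $\mathbf{\Theta}$, and hence $\mathbf{X}=\mathbf{\vartheta}_{A,B}^K$ solves the constrained problem. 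The appeal of this route is that it never uses convexity of $\mathbf{\Theta}$: the only convex object is the relaxation $\mathbf{\Omega}_{B}(K)$, which is exactly why the conclusion survives a non-convex feasible region.

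The step I expect to carry the real weight is the second one, namely certifying that $\mathbf{\vartheta}_{A,B}^K$ maximizes the objective over $\mathbf{\Omega}_{B}(K)$ and is the unique such point; everything else is bookkeeping once the inclusion is in place. A minor technical point worth stating explicitly is the activeness of the budget constraint, which rules out interior maxima and rests on strict monotonicity of the objective, i.e. on all weights $a_i$ being strictly positive.
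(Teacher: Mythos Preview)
Your proposal is correct and follows essentially the same route as the paper's proof: both relax to the linear budget $\sum_i b_i X_i \leq K$, apply the weighted AM--GM inequality to the quantities $b_i X_i/a_i$ with weights proportional to $a_i$, and read off equality when $b_i X_i/a_i$ is constant with the budget tight. Your write-up is somewhat more explicit about the relaxation step and the activeness of the constraint, and you add the optional KKT check, but the core argument is identical.
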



\begin{proof}
Using definition (1) and geometric-mean arithmetic-mean inequality, we have:
	\begin{eqnarray}
	\frac{K}{\sum_{i=1}^{n} a_i} &\geq& \frac{\sum_{i=1}^{n} a_i \frac{b_i X_i}{a_i}}{a} \\&\geq&  \nonumber \sqrt[ a]{\prod_{i=1}^n (\frac{b_i}{a_i})^{a_i}X_i^{a_i}} 
	\end{eqnarray} 
where $a=\sum_{i=1}^{n} a_i$.\\ Geometric-mean achieves its upper bound (arithmatic-mean) if 
\[
	\displaystyle{\frac{b_i X_i}{a_i}=\frac{b_j X_j}{a_j}} ~~~ \forall i,j \in \{1,2,...,n\}
\] 
Since we want to maximize the geometric-mean the arithmatic mean should achieve its upper bound ($K$) simultaneously. Therefore,
\[ 
	\displaystyle{\sum_{i=1}^{n} b_i X_i=\sum_{i=1}^{n} a_i \frac{b_i X_i}{a_i}= \sum_{i=1}^{n} a_i t=K 
		\Rightarrow t=\frac{K}{\sum_{i=1}^{n} a_i}}, 
\]
which completes the proof.	
\end{proof}








The following Theorem proposes the optimal solution for a class of optimization problems which includes common-rate optimization problems.
\begin{theorem}
	\label{Theo:common rate}
The optimal solution of optimization problem 

\[
\begin{cases}
\max~~~~ \min \limits_{i} (X_i)\\
\text{s.t.}~~~~~~~  \mathbf{X} \in \mathbf{\Theta}  \\

\end{cases}
\]	
is  $\mathbf{Y} = (\frac{K}{\sum_{i=1}^{n} b_i },...,\frac{K}{\sum_{i=1}^{n} b_i } )$ if and only if $~\mathbf{Y} \in \mathbf{\Theta}$.

\end{theorem}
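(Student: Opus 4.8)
The plan is to split the biconditional and dispose of the ``only if'' direction immediately: any point that solves the optimization problem must in particular lie in the feasible set, so if $\mathbf{Y}$ is optimal then necessarily $\mathbf{Y}\in\mathbf{\Theta}$. All the substance lies in the ``if'' direction, where I assume $\mathbf{Y}\in\mathbf{\Theta}$ and show that $\mathbf{Y}$ attains the maximum of $\min_i(X_i)$ over $\mathbf{\Theta}$.

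First I would establish a universal upper bound on the objective that holds at every feasible point. Fix an arbitrary $\mathbf{X}=(X_1,\ldots,X_n)\in\mathbf{\Theta}$ and write $m=\min_i X_i$. Since $X_i\geq m$ for each $i$ and each $b_i>0$, multiplying by $b_i$ and summing gives $\sum_{i=1}^{n} b_i X_i \geq m\sum_{i=1}^{n} b_i$. Because $\mathbf{\Theta}\subset\mathbf{\Omega}_{B}(K)$, Definition~1 yields $\sum_{i=1}^{n} b_i X_i \leq K$. Combining the two inequalities gives $m\sum_{i=1}^{n} b_i \leq K$, hence $\min_i X_i \leq K/\sum_{i=1}^{n} b_i$ for every $\mathbf{X}\in\mathbf{\Theta}$.

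Next I would evaluate the objective at the candidate $\mathbf{Y}$. All coordinates of $\mathbf{Y}$ equal $K/\sum_{i=1}^{n} b_i$, so $\min_i Y_i = K/\sum_{i=1}^{n} b_i$, which is exactly the upper bound just derived. Under the standing assumption $\mathbf{Y}\in\mathbf{\Theta}$, the point $\mathbf{Y}$ is feasible and meets the bound with equality; therefore no feasible point can do strictly better, and $\mathbf{Y}$ is optimal. This closes the ``if'' direction and completes the proof.

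I do not expect a genuine obstacle here: the heart of the argument is the single estimate $m\sum_i b_i \leq \sum_i b_i X_i \leq K$, which is essentially the $\min$-analogue of the AM--GM step used for Theorem~\ref{Theo:sum rate}. The only points needing a touch of care are (i) confirming that $\mathbf{Y}$ saturates the defining constraint, i.e.\ $\sum_{i=1}^{n} b_i\,\bigl(K/\sum_{j=1}^{n} b_j\bigr)=K$, so that the bound is tight and the optimum is genuinely attained rather than merely approached, and (ii) observing that feasibility of $\mathbf{Y}$ is exactly the hypothesis being assumed rather than something to be established separately.
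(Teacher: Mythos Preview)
Your argument is correct and uses exactly the same core inequality as the paper, namely $\bigl(\min_i X_i\bigr)\sum_i b_i \le \sum_i b_i X_i \le K$, followed by the observation that $\mathbf{Y}$ attains the resulting bound. The one extra step in the paper's proof is a short contradiction argument showing that any feasible point with $\min_i X_i = K/\sum_i b_i$ must have \emph{every} coordinate equal to that value (else $\sum_i b_i X_i > K$), so $\mathbf{Y}$ is the \emph{unique} maximizer rather than merely \emph{an} maximizer; you omit this, which is fine if the statement is read as asserting optimality rather than uniqueness.
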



\begin{proof}	
Simply we have:
\begin{eqnarray}
\sum_{i=1}^{n} b_i \min{ \{X_i \}} &\leq& \sum_{i=1}^{n} b_i X_i \leq K \\
 \min{ \{X_i \}}&\leq&\frac{K}{\sum_{i=1}^{n} b_i } \nonumber
\end{eqnarray} 

To achieve the optimal solution we let $\displaystyle{\min{ \{X_i \}} = \frac{K}{\sum_{i=1}^{n} b_i }}$. \\Now if $\displaystyle{ \exists j \in  \{1,2,...,n\} ~~ X_j > \frac{K}{\sum_{i=1}^{n} b_i }}$
then
\begin{eqnarray}
\sum\limits_{i=1}^{n} b_i X_i  &\geq& \sum\limits_{i=1, i\neq j}^{n} b_i \min{ \{X_i \}} + b_j X_j \nonumber\\  
&>&  \frac{K(\sum\limits_{i=1, i\neq j}^{n} b_i)}{\sum\limits_{i=1 }^{n} b_i}+\frac{Kb_j}{\sum\limits_{i=1 }^{n} b_i} = K \nonumber
\end{eqnarray} 

which is a contradiction and completes the proof.
\end{proof}

Following Remarks reobtain the results in \cite{6} using the proposed Theorems.

\begin{remark}
	\label{remark 1}
For the rate region $\mathbf{\Theta}$ obtained in \cite{6},  $K=2+2 \gamma_{max}$, $b_i=1 ~ \forall i \in \{1,2\}$,	using Theorem \ref{Theo:sum rate}, the optimal solution for maximizing the sum-rate ($a_i=1 $ and $X_i=1+SNR_i~ \forall i \in \{1,2\}$ ) will be $SNR_i=\gamma_{max}~ \forall i \in \{1,2\}$ which is equal to the optimal solution obtained in \cite{6}.
\end{remark}

\begin{remark}
	\label{remark 2}
	Using Theorem \ref{Theo:common rate}, the optimal solution for maximizing the common-rate ( $X_i=SNR_i~ \forall i \in \{1,2\}$ ) for the rate region $\mathbf{\Theta}$ proposed in \cite{6}, can be obtained as $SNR_i=\gamma_{max}~ \forall i \in \{1,2\}$ which is equal to the optimal solution obtained in \cite{6}.
\end{remark}

\begin{remark}[]
From Remarks \ref{remark 1} and \ref{remark 2}, the  sum-rate and common-rate maximization problems defined for the obtained rate region in \cite{6} are equivalent which is one of the main results of \cite{6}.   
\end{remark}




\section{Applications}

\begin{figure}[t]
  \centering
  \vspace*{-2 cm}
    \includegraphics[width=9.5cm,height=15cm,keepaspectratio]{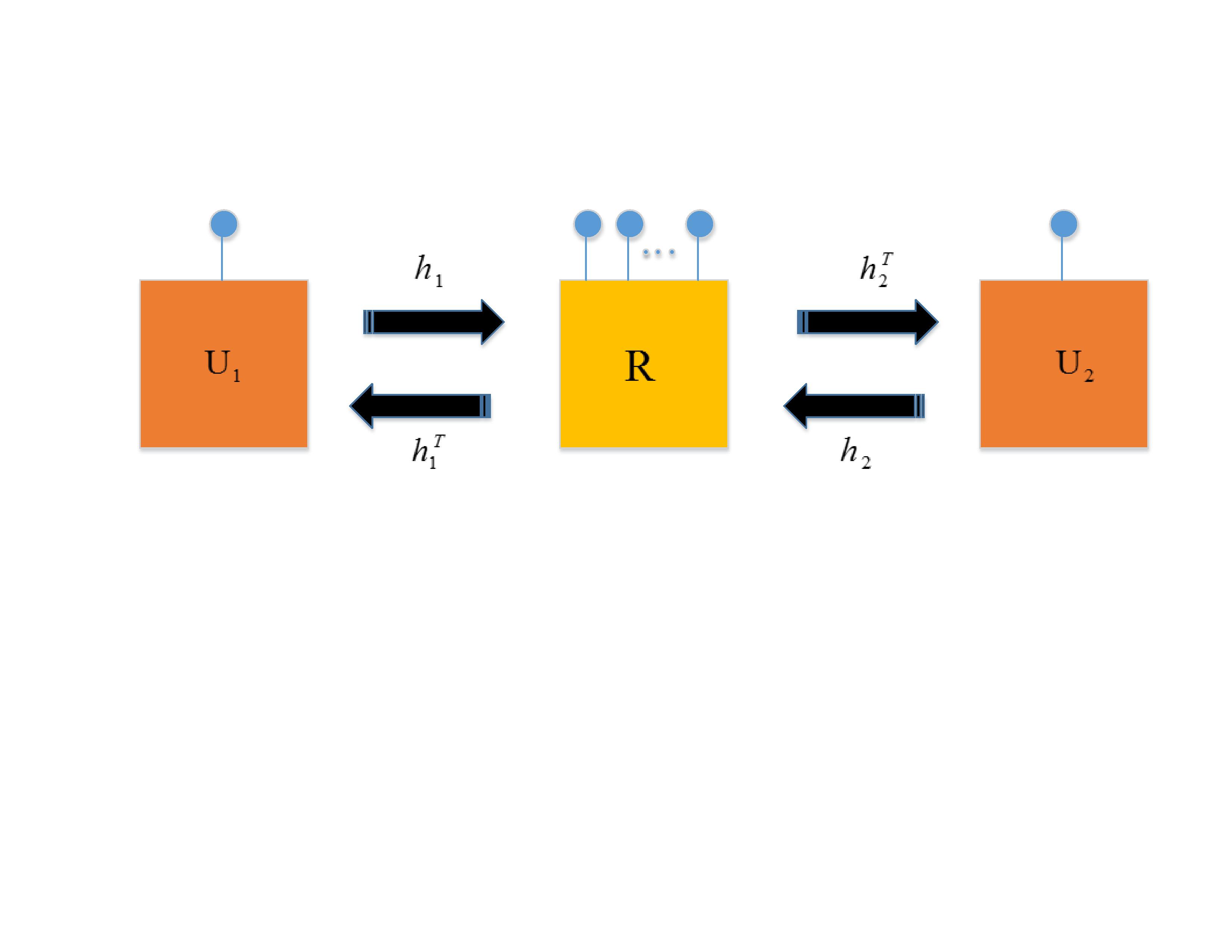}
    \vspace*{-4.2 cm}
    \caption{The system model of a two-way relay network using multiple-antenna relay. }
   
\end{figure}

In this section, to show the usefulness of the framework considered in the previous section, we propose a practical example. The optimization results that we present, to the best of our knowledge are novel and have not appeared in the literature. \\
We consider a two-way relay network including one relay which is equipped with $N_r$ antennas and two users $U_1$ and $U_2 $ with a single antenna. $\mathbf{h}_1$ and $\mathbf{h}_2 $ are the channel coefficients between $U_1$ and $U_2 $, and relay $R$. These coefficients are assumed to be reciprocal and independent. The transmit powers for users $U_a$, $U_b$ and relay $R$ are denoted by $P_1$, $P_2$ and $P_r$, respectively. Moreover, we assume additive white Gaussian noise (AWGN) with mean zero and variance $\sigma^2$  for each hop. The system model has been depicted in Fig. 1.
 The procedure of communication between two users is as follow:
 
 First, both users send their messages to the relay in the first time slot. In the second time slot, the relay transmits the combined messages to user $U_1$ (with the corresponding transmit weight vector for maximum ratio combining \cite{7} to $U_1$). In the third time slot, relay uses the corresponding transmit weight vector for maximum ratio combining to $U_2$ and sends the combined messages to $U_2$.
 

The received signal at $R$ is given by :
\begin{eqnarray}
y_r=\sqrt{P_1} \mathbf{h}_1 x_1 +\sqrt{P_2} \mathbf{h}_2 x_2 + n_r,  
\end{eqnarray} 
where unit energy transmit signals are denoted by $x_1$ and $x_2$ and $n_r$ is the noise at the relay.\
The transmit weight vectors for maximum ratio combining for the second and third time slots are as 
$ \mathbf{w}_1=(\frac{\mathbf{h}_2^\dagger}{\|\mathbf{h}_2\|})(\frac{\mathbf{h}_1^T}{\|\mathbf{h}_1\|})^\dagger$, $ \mathbf{w}_2=(\frac{\mathbf{h}_1^\dagger}{\|\mathbf{h}_1\|})(\frac{\mathbf{h}_2^T}{\|\mathbf{h}_2\|})^\dagger$ and the relay broadcasts the message with the gain:
\[
\displaystyle{k=\sqrt{\frac{P_r}{P_1\|\mathbf{h}_1\|^2+P_2\|\mathbf{h}_2\|^2+\sigma^2}}}
\]
After self-interference cancellation by each user, the received signal at $U_1$ and $U_2$ can be expressed as: 
\begin{eqnarray}\label{eqsignal}
\hat{y}_1 &=& k \sqrt{P_2} \|\mathbf{h}_1\|\|\mathbf{h}_2\|x_2 + k\|\mathbf{h}_1\|\hat{n}_{1r}+n_1 \nonumber\\
\hat{y}_2 &=& k \sqrt{P_1} \|\mathbf{h}_2\|\|\mathbf{h}_1\|x_1 + k\|\mathbf{h}_2\|\hat{n}_{2r}+n_2 ,
\end{eqnarray}
where $\hat{n}_{1r}= \frac{\mathbf{h}_2^\dagger \mathbf{n}_r}{\|\mathbf{h}_2\|}$, $\hat{n}_{2r}= \frac{\mathbf{h}_1^\dagger \mathbf{n}_r}{\|\mathbf{h}_1\|}  $ and $n_1$, $n_2$ are the AWGN noises at users $U_1$ and $U_2$, respectively.
Using \eqref{eqsignal}, one can easily show that the SNRs at the receiver of each user are given by: 
\begin{eqnarray}
\gamma_{2}&=&\frac{P_1 P_r}{\sigma^2}\Bigg[\frac{\|\mathbf{h}_1\|^2\|\mathbf{h}_2\|^2}{(P_2+P_r)\|\mathbf{h}_2\|^2+P_1\|\mathbf{h}_1\|^2+\sigma^2} \Bigg] 
\nonumber\\
\gamma_{1}&=&\frac{P_2 P_r}{\sigma^2}\Bigg[\frac{\|\mathbf{h}_1\|^2\|\mathbf{h}_2\|^2}{(P_1+P_r)\|\mathbf{h}_1\|^2+P_2\|\mathbf{h}_2\|^2+\sigma^2} \Bigg] 
.
\end{eqnarray}


\subsection{Common-rate}

We aim to maximize the common-rate subject to the total power constraint. Therefore, the optimization problem can be written as

\begin{equation}
\begin{cases}
\max\limits_{P_1,P_2,P_r}~~~~     \min (\gamma_1,\gamma_2)\\
\text {s.t.}~~~~~~~  P_1 + P_2 + P_r \leq P_t.\\	
\end{cases}
\end{equation}


In the next proposition, we reformulate the optimization problem (10).

\newtheorem{prop}{Proposition}
\begin{prop}
The constraint $\displaystyle{P_1 + P_2 + P_r \leq P_t}$ is equivalent to 
\[
		\displaystyle{\gamma_{1}+\gamma_{2}\leq\frac{\gamma_{1r}\gamma_{2r}}{\left(\sqrt{\gamma_{1r}+1}+\sqrt{\gamma_{2r}+1}\right)^2}},  
\]
where  $\displaystyle{\gamma_{1r}=\frac{P_t}{\sigma^2}\left| \left|  \mathbf{h}_1\right|\right|^2} $ , $\displaystyle{\gamma_{2r}=\frac{P_t}{\sigma^2}\left| \left|  \mathbf{h}_2 \right|\right|^2} $.
\end{prop}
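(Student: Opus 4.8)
The plan is to show that the two constraints carve out the same feasible set in SNR space by computing, for an arbitrary target pair $(\gamma_1,\gamma_2)$, the \emph{minimum} total power $P_1+P_2+P_r$ that realizes it, and then proving that this minimum does not exceed $P_t$ exactly when $\gamma_1+\gamma_2 \le \frac{\gamma_{1r}\gamma_{2r}}{(\sqrt{\gamma_{1r}+1}+\sqrt{\gamma_{2r}+1})^2}$. First I would normalize by writing $\alpha=P_1/P_t$, $\beta=P_2/P_t$, $\rho=P_r/P_t$, so the budget becomes $\alpha+\beta+\rho\le 1$, and re-express the given SNRs in the compact form $\gamma_1=\frac{BC_1}{A+B+C_1+1}$ and $\gamma_2=\frac{AC_2}{A+B+C_2+1}$, where $A=\alpha\gamma_{1r}$, $B=\beta\gamma_{2r}$, $C_1=\rho\gamma_{1r}$, $C_2=\rho\gamma_{2r}$. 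This substitution absorbs all the channel-dependent clutter into the two scalars $\gamma_{1r},\gamma_{2r}$ and is what makes the remaining algebra tractable.

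Next I would treat $(\gamma_1,\gamma_2)$ as fixed and invert the SNR map. For a fixed relay share $\rho$ (hence fixed $C_1,C_2$) the two SNR relations are linear in $A,B$ once we set $T=A+B+1$, so they determine $A=\gamma_2(1+T/C_2)$, $B=\gamma_1(1+T/C_1)$ and a single consistent value $T=\frac{\rho(\gamma_1+\gamma_2+1)}{\rho-\Phi}$, where $\Phi=\frac{\gamma_1}{\gamma_{1r}}+\frac{\gamma_2}{\gamma_{2r}}$ and feasibility forces $\rho>\Phi$. Substituting back, the normalized total power reduces to a one-variable expression of the form $\mathrm{const}+(\rho-\Phi)+\frac{M}{\rho-\Phi}$ with $M=\frac{(\gamma_1+\gamma_2)(\gamma_1+\gamma_2+1)}{\gamma_{1r}\gamma_{2r}}$. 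Applying the arithmetic-mean--geometric-mean inequality (exactly the tool used in Theorem~\ref{Theo:sum rate}) minimizes it at $\rho-\Phi=\sqrt{M}$, giving minimum total power
\[
\frac{(\gamma_1+\gamma_2)(\gamma_{1r}+\gamma_{2r})}{\gamma_{1r}\gamma_{2r}}+2\sqrt{\frac{(\gamma_1+\gamma_2)(\gamma_1+\gamma_2+1)}{\gamma_{1r}\gamma_{2r}}}.
\]

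The last step is to impose that this minimum be $\le 1$. Writing $\Gamma=\gamma_1+\gamma_2$, I would isolate the square-root term and square, producing the quadratic $(\gamma_{1r}-\gamma_{2r})^2\,\Gamma^2-2\gamma_{1r}\gamma_{2r}(\gamma_{1r}+\gamma_{2r}+2)\,\Gamma+(\gamma_{1r}\gamma_{2r})^2\ge 0$, where I have used $(\gamma_{1r}+\gamma_{2r})^2-4\gamma_{1r}\gamma_{2r}=(\gamma_{1r}-\gamma_{2r})^2$. The discriminant collapses because $\gamma_{1r}+\gamma_{2r}+\gamma_{1r}\gamma_{2r}+1=(\gamma_{1r}+1)(\gamma_{2r}+1)$, so the two roots factor cleanly as $\frac{\gamma_{1r}\gamma_{2r}}{(\sqrt{\gamma_{1r}+1}\pm\sqrt{\gamma_{2r}+1})^2}$; the smaller root is precisely the claimed bound, and the nonnegativity condition required before squaring selects $\Gamma\le\frac{\gamma_{1r}\gamma_{2r}}{(\sqrt{\gamma_{1r}+1}+\sqrt{\gamma_{2r}+1})^2}$ as the admissible branch.

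I expect the squaring-and-factoring to be the main obstacle: one must carry the radical correctly, justify the squaring by verifying that the isolated linear term stays nonnegative on the relevant branch, and spot the two hidden factorizations that make the roots close. Two further points are needed for the full equivalence rather than a single inclusion: confirming that the minimizing allocation has $\alpha,\beta,\rho\ge 0$ (so that every pair with $\gamma_1+\gamma_2$ below the bound is genuinely achievable, which is where the condition $\rho>\Phi$ is used), and separately treating the degenerate case $\gamma_{1r}=\gamma_{2r}$, in which the quadratic reduces to a linear inequality that is checked to yield the same bound as a limit.
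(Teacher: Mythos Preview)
Your argument is correct, and in fact it proves more than the paper's own proof does, but the route is genuinely different. The paper proceeds in the opposite direction: it parametrizes the power budget with equality via $P_1=\alpha\beta P_t$, $P_2=(1-\alpha)\beta P_t$, $P_r=(1-\beta)P_t$, treats $f(\alpha,\beta)=\gamma_1+\gamma_2$ as the objective, and simply sets $\partial f/\partial\alpha=\partial f/\partial\beta=0$ to locate the global maximum, which comes out to $\frac{\gamma_{1r}\gamma_{2r}}{(\sqrt{\gamma_{1r}+1}+\sqrt{\gamma_{2r}+1})^2}$. That is a one-line calculus sketch and, strictly read, only establishes the inclusion $\mathbf{\Theta}\subset\mathbf{\Omega}_B(K)$; the converse (that every $(\gamma_1,\gamma_2)$ with $\gamma_1+\gamma_2\le K$ is actually achievable) is not argued there, although for the paper's purposes only the inclusion plus membership of the specific optimizer in $\mathbf{\Theta}$ is needed.

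Your approach inverts the SNR map at fixed $(\gamma_1,\gamma_2)$, reduces the required total power to a one-variable expression, minimizes it by AM--GM rather than calculus, and then solves the resulting inequality in $\Gamma=\gamma_1+\gamma_2$. The payoff is twofold: you obtain the full equivalence (both inclusions), since the minimum required power depends only on $\Gamma$ and is monotone in it, and you avoid verifying second-order conditions. The cost is the heavier algebra in the squaring-and-factoring step, which you have handled correctly, including the two identities $(\gamma_{1r}+\gamma_{2r})^2-4\gamma_{1r}\gamma_{2r}=(\gamma_{1r}-\gamma_{2r})^2$ and $\gamma_{1r}+\gamma_{2r}+\gamma_{1r}\gamma_{2r}+1=(\gamma_{1r}+1)(\gamma_{2r}+1)$. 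Your caveats about nonnegativity of $\alpha,\beta,\rho$ and the degenerate case $\gamma_{1r}=\gamma_{2r}$ are the right loose ends to flag; the first is immediate from $A=\gamma_2(1+T/C_2)>0$, $B=\gamma_1(1+T/C_1)>0$ once $\rho>\Phi$, and the second follows by continuity or by the linear limit of your quadratic.
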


\begin{proof}
We can define  $ P_1=\alpha \beta P_t$, $ P_2=(1-\alpha)\beta P_t$ and $ P_r=(1-\beta)P_t$ in which ( $0 \leq \alpha , \beta \leq 1  $). By substituting these definitions into the objective function $f(\alpha,\beta)=\gamma_{1}+\gamma_{2}$  and forming the equations $\displaystyle{\frac{\partial f}{\partial \beta}=0}$ and $\displaystyle{\frac{\partial f}{\partial \alpha}=0}$, 
the global maximum of function $f(\alpha,\beta)$ can be obtained as $\displaystyle{\frac{\gamma_{1r}\gamma_{2r}}{\left(\sqrt{\gamma_{1r}+1}+\sqrt{\gamma_{2r}+1}\right)^2}}.
$ 
\end{proof}
Hence, the optimization problem (10) turns to:
\begin{equation}
\begin{cases}
\max\limits_{(\gamma_{1},\gamma_{2}) \in \mathbf{\Theta}}~~~~      \min (\gamma_1,\gamma_2)\\
~~~\text {s.t.}~~~~~~~  \mathbf{\Theta} \subset \mathbf{\Omega}_{B}( \frac{\gamma_{1r}\gamma_{2r}}{\left(\sqrt{\gamma_{1r}+1}+\sqrt{\gamma_{2r}+1}\right)^2}) ,\\	
\end{cases}
\end{equation}
which is a special case of the optimization problem (3).


We present the optimal solution for problem (11) using Theorem (2) and show that the solution is in the subset $\mathbf{\Theta}$ and therefore, is feasible. 

Using Theorem (2) the optimal solution is 
\begin{eqnarray}
\gamma_{1}=\frac{\gamma_{1r}\gamma_{2r}}{2\left(\sqrt{\gamma_{1r}+1}+\sqrt{\gamma_{2r}+1}\right)^2}  \nonumber\\
\gamma_{2}=\frac{\gamma_{1r}\gamma_{2r}}{2\left(\sqrt{\gamma_{1r}+1}+\sqrt{\gamma_{2r}+1}\right)^2} \nonumber,
\end{eqnarray}

which results in 
\begin{align}
	\beta^{opt}&=0.5 \nonumber\\
	\alpha^{opt}&=\frac{-\gamma_{2r}-1+\sqrt{(\gamma_{2r}+1)(\gamma_{1r}+1)}}{\gamma_{1r}-\gamma_{2r}}, \nonumber
\end{align}
where both of them satisfy $0 \leq \alpha , \beta \leq 1  $. Substituting $\beta^{opt}$ and $\alpha^{opt}$ into $P_1, P_2, P_r$, we have:
\begin{eqnarray}
&P_1&=\frac{P_t\left(-\gamma_{2r}-1+\sqrt{(\gamma_{2r}+1)(\gamma_{1r}+1)}\right)}{2(\gamma_{1r}-\gamma_{2r})} \nonumber \\ &P_2&=\frac{P_t\left(\gamma_{1r}+1-\sqrt{(\gamma_{2r}+1)(\gamma_{1r}+1)}\right)}{2(\gamma_{1r}-\gamma_{2r})} \\ &P_r&=\frac{P_t}{2}  \nonumber
\end{eqnarray}
where 
$\displaystyle{\gamma_{1r}=\frac{P_t}{\sigma^2}\left| \left|  \mathbf{h}_1\right|\right|^2} $ , $\displaystyle{\gamma_{2r}=\frac{P_t}{\sigma^2}\left| \left|  \mathbf{h}_2 \right|\right|^2} $.

\subsection{Weighted sum-rate}






In this part, we aim to maximize the weighted sum-rate of the considered system model subject to the total power constraint. The weighted sum-rate of a two-users system can be expressed as  
\setlength{\arraycolsep}{0.0em}
\begin{eqnarray}
R&{}={}&\frac{a_1}{2}\log_2(1+\gamma_1)+\frac{a_2}{2}\log_2(1+\gamma_2)\nonumber\\
&&{=}\:\frac{1}{2} \log_2[(1+\gamma_1)^{a_1} (1+\gamma_2)^{a_2}
\end{eqnarray}
\setlength{\arraycolsep}{5pt}
Clearly, one can equivalently maximize the term $(1+\gamma_1)^{a_1} (1+\gamma_2)^{a_2}$. Hence, the optimization problem can be considered as
\begin{equation}
\begin{cases}
\max\limits_{P_1,P_2,P_r}~~~~      (1+\gamma_1)^{a_1} (1+\gamma_2)^{a_2}\\
~~~\text {s.t.}~~~~~~~  P_1 + P_2 + P_r \leq P_t,\\	
\end{cases}
\end{equation}
Using Proposition 1, optimization problem (18) turns to
\begin{equation}
\begin{cases}
\max\limits_{(1+\gamma_{1},1+\gamma_{2}) \in \mathbf{\Theta}}~~~~      (1+\gamma_1)^{a_1} (1+\gamma_2)^{a_2} \\
~~~~~~\text {s.t.}~~~~~~~~~~~~  \mathbf{\Theta} \subset \mathbf{\Omega}_{B}(2+ \frac{\gamma_{1r}\gamma_{2r}}{\left(\sqrt{\gamma_{1r}+1}+\sqrt{\gamma_{2r}+1}\right)^2}) ,\\	
\end{cases}
\end{equation}
which is a special case of the optimization problem (2).\\
From Theorem 1, the optimal solution of optimization problem (19) occurs when
\begin{eqnarray}
&\gamma_1&=\frac{a_1-a_2}{a_1+a_2}
+ \frac{a_1}{a_1+a_2}(\frac{\gamma_{1r}\gamma_{2r}}{\left(\sqrt{\gamma_{1r}+1}+\sqrt{\gamma_{2r}+1}\right)^2})
\nonumber\\
&\gamma_2&=\frac{a_2-a_1}{a_1+a_2}
+ \frac{a_2}{a_1+a_2}(\frac{\gamma_{1r}\gamma_{2r}}{\left(\sqrt{\gamma_{1r}+1}+\sqrt{\gamma_{2r}+1}\right)^2})\nonumber
\end{eqnarray}
For instance, assuming $a_1=2$, $a_2=1$, $P_t=0~dB$, $N_r=100$, $\gamma_{1r}=24 $,  and $\gamma_{2r}=96 $, the optimal solution will be $\gamma_1=7.3$ and $\gamma_2=3.15$ which translates to $P_1= (0.1996)P_t=0.1996,P_2=(0.2362)P_t=0.2362 $ and $P_r=(0.5642)P_t=0.5642$.

\begin{figure}[t!]
	\centering
	\includegraphics[width=21pc ]{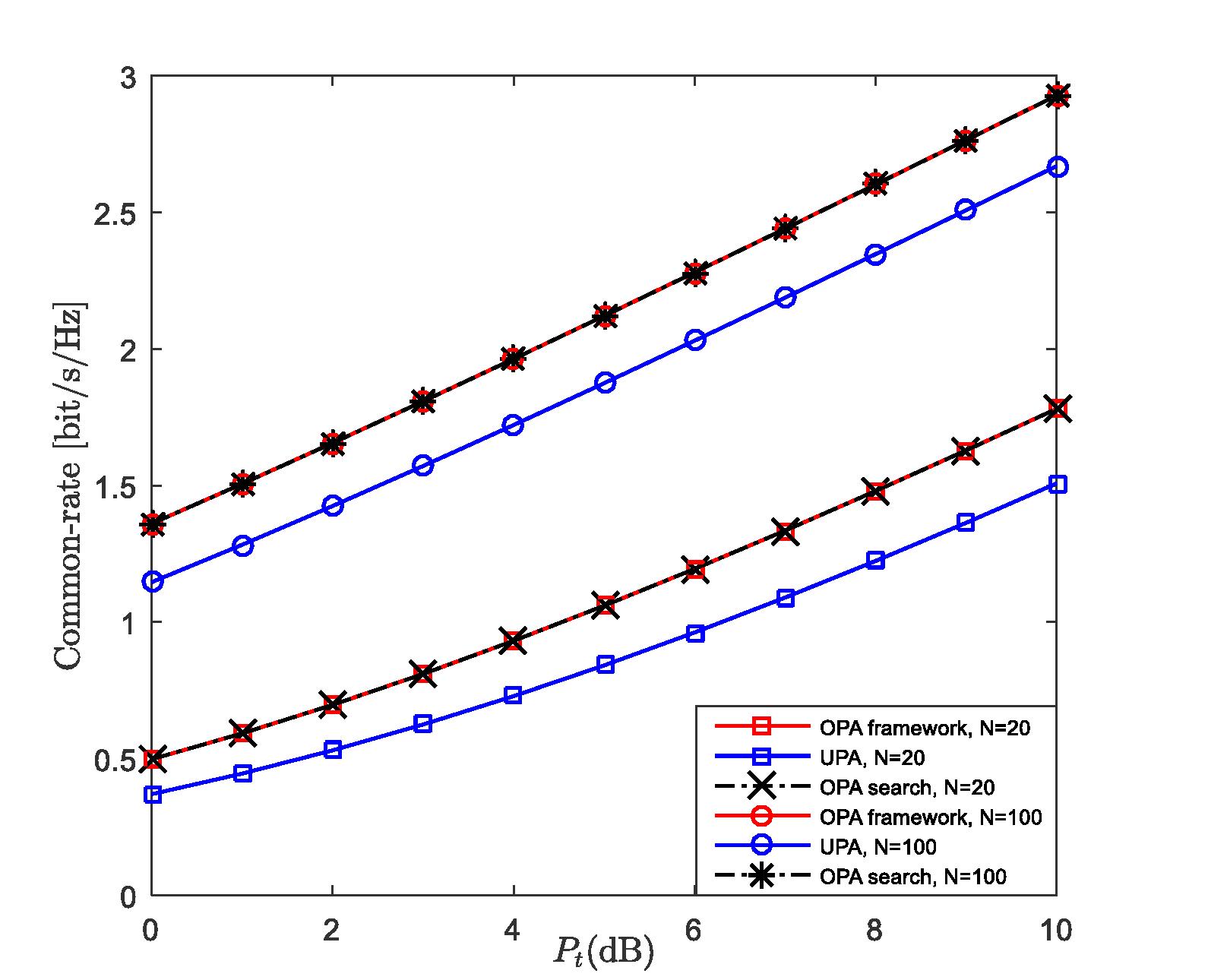}
	\caption{Achievable common-rate  $a_1=2, a_2=1, \sigma_1^2=0.25, \sigma_2^2=1$ and $\sigma^2=1 $.}	
\end{figure}

\section{Numerical and Simulation results}

In this section, simulation results have been presented to verify the optimality of the solutions proposed by Theorems 1 and 2. 


In Figs.~1 and 2, we have plotted the achievable common-rate and weighted sum-rate of the considered system model for the presented solutions in Section 3. To verify the optimality of these results, we also present the solution which has been obtained through searching (with step 0.001) the feasible SNR region. This has been performed for two cases of $N_r=16$ and $N_r=100$. Furthermore, the achievable common-rate corresponds to uniform power allocation (UPA) has been given. As can be seen, the theoretical results match well with the search method solutions in both cases. Moreover, as we expected, the proposed optimal power allocations outperform the UPA.



\section{Conclusion}

In this letter, we proposed a framework to obtain the optimal solution of various optimization problems including weighted sum-rate and common-rate optimization problems subject to a few conditions on the feasible SNR region. This framework does not restrict the region to be convex.

 To verify our analyses, we presented the optimal power allocations for a practical example of a two-way relay network under the assumption of large-scale antennas for the relay and sum power constraint. Finally, simulation results verified the optimality of the obtained theoretical solutions. One benefit of obtaining optimal solutions in closed form is that it enables further studies such as ergodic sum-rate, outage or error rate analysis. .



\begin{figure}[t!]
	\centering
	\includegraphics[width=21pc ]{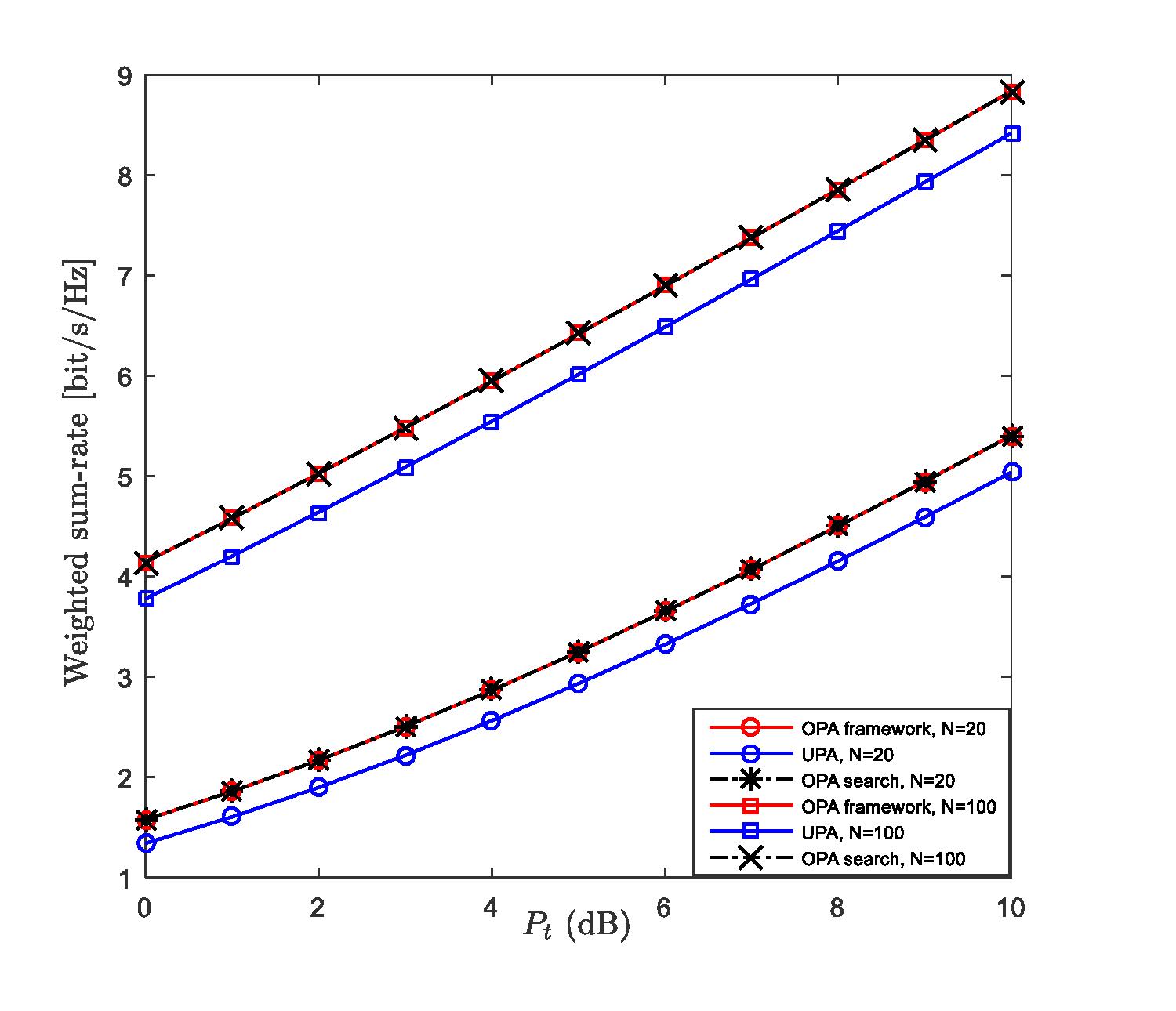}
		\vspace{-2.5em}
	\caption{Achievable weighted sum-rate  $a_1=2, a_2=1, \sigma_1^2=0.25, \sigma_2^2=1$ and $\sigma^2=1 $. }	
\end{figure}


%


\ifCLASSOPTIONcaptionsoff
  \newpage
\fi



%
\bibliographystyle{ieeetran}
\bibliography{Reference}{}

%




\end{document}